\setlist{leftmargin=*} 
\newtheorem{theorem}{Theorem}
\newtheorem{proposition}[theorem]{Proposition}
\newtheorem{corollary}[theorem]{Corollary}
\newtheorem{lemma}[theorem]{Lemma}
\crefname{theorem}{Theorem}{Theorems}
\crefname{lemma}{Lemma}{Lemmas}
\crefname{proposition}{Proposition}{Propositions}
\crefname{corollary}{Corollary}{Corollaries}
\crefname{section}{Section}{Sections}
\crefname{appendix}{Appendix}{Appendices}
\newcommand{\etal}{\textit{et al.\/}\xspace}
\newcommand{\half}{\tfrac{1}{2}}
\newcommand{\eps}{\varepsilon}
\DeclareMathOperator*{\E}{\mathbb{E}}
\newcommand{\cY}{\mathcal{Y}}
\newcommand{\Ent}{\mathcal{H}}
\newcommand{\Bh}{\mathcal{B}}
\DeclareMathOperator{\js}{\mathsf{JS}}
\DeclareMathOperator{\hl}{\mathsf{H}}
\title{A note on some inequalities used in channel polarization and polar coding}
\author{T.S.~Jayram\thanks{T.S. Jayram is with IBM Almaden Research Center, San Jose CA 95120, USA. Email: jayram@us.ibm.com}%
\and 
Erdal~Ar{\i}kan\thanks{E. Ar{\i}kan is with the Department of Electrical-Electronics Engineering,  
Bilkent University, 06800, Ankara, Turkey. Email: arikan@ee.bilkent.edu.tr}}
\date{}
\begin{document}
\maketitle

\begin{abstract}
We give a unified treatment of some inequalities that are used in the
proofs of channel polarization theorems involving a binary-input discrete memoryless channel. 
\end{abstract}


Let $W$ be a binary-input discrete memoryless channel with $W(y|x)$ denoting the transition probability that output letter 
$y \in \cY$ is received given that input $x\in \{0,1\}$ is sent.
Assume without loss of generality that the channel is non-degenerate, i.e., 
$W(y|0)+W(y|1)> 0$ for every $y \in \cY$.
Let the symmetric capacity be defined as:\footnote{$\log$ denotes the binary logarithm and $\ln$ denotes the natural logarithm.}
\[ I(W) \coloneqq \sum_{y} \sum_{x \in \{0,1\}} \half W(y|x)\log \frac{W(y|x)}{\half W(y|0) + \half W(y|1)} \]
and the Bhattacharyya parameter as:
\[ Z(W) \coloneqq \sum_y \sqrt{W(y|0)W(y|1)} \]
Below, we prove various inequalities relating the Bhattacharyya parameter to the symmetric capacity.

Let $\Ent(q) \coloneqq -q\log(q) - (1-q)\log(1-q)$ denote the binary entropy 
function. Also define the Bhattacharyya
function $\Bh(q) \coloneqq 2\sqrt{q(1-q)}$. 
Both $\Ent(q)$ and $\Bh(q)$ are concave functions whose common domain
and range are both equal to the interval $[0,1]$.
Define:
\[ \phi: u \in [0,1] 
\mapsto \Ent\Bigl(\tfrac{1-\Bh(\frac{1-u}{2})}{2}\Bigr)
= \Ent\bigl(\tfrac{1-\sqrt{1-u^2}}{2}\bigr) \]
It can be verified that $\phi$ is a bijection and that 
$\phi(\Bh(q)) = \Ent(q)$ for all $q \in [0,1]$.
Anantharam~\etal~\cite{AnantharamGKN13} studied $\phi$ in a different setting and showed that it is convex. We reprove this below and demonstrate other properties of $\phi$ that yield useful relationships between $I(W)$ and $Z(W)$ in a unified manner.

\begin{lemma} \label{lem:phi-deriv}
$0 < \phi''(u) < \phi'(u)/u$, for all $u \in (0,1)$.
\end{lemma}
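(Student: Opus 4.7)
The plan is to compute $\phi'(u)$ and $\phi''(u)$ in closed form via the substitution $s := \sqrt{1-u^2}$, and then reduce both inequalities to elementary facts about the power series of $\ln\tfrac{1+s}{1-s}$. Setting $q = q(u) = (1-s)/2$, so that $1-q = (1+s)/2$ and $q(1-q) = u^2/4$, a direct computation gives $q'(u) = u/(2s)$, $q''(u) = 1/(2s^3)$, $\Ent'(q) = \tfrac{1}{\ln 2}\ln\tfrac{1+s}{1-s}$, and $\Ent''(q) = -\tfrac{1}{\ln 2 \cdot q(1-q)} = -\tfrac{4}{u^2 \ln 2}$. Applying the chain rule to $\phi(u) = \Ent(q(u))$ and simplifying yields
\[
\phi'(u) \;=\; \frac{u}{2s\,\ln 2}\ln\frac{1+s}{1-s},
\qquad
\phi''(u) \;=\; \frac{1}{2s^3\,\ln 2}\left(\ln\frac{1+s}{1-s} - 2s\right).
\]
In particular, $\phi'(u) > 0$ throughout $(0,1)$, so the upper bound in the lemma is a nontrivial positivity statement as well.

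For the lower bound $\phi''(u) > 0$, I would invoke the Taylor expansion
$\ln\tfrac{1+s}{1-s} = 2s + \tfrac{2s^3}{3} + \tfrac{2s^5}{5} + \cdots$, which immediately shows $\ln\tfrac{1+s}{1-s} - 2s > 0$ for every $s \in (0,1)$, and hence $\phi''(u) > 0$.

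For the upper bound $\phi''(u) < \phi'(u)/u$, I would observe that it is equivalent to $\bigl(\phi'(u)/u\bigr)' < 0$, i.e., to $\phi'(u)/u$ being strictly decreasing in $u$. From the formula above,
\[
\frac{\phi'(u)}{u} \;=\; \frac{1}{2s\,\ln 2}\ln\frac{1+s}{1-s} \;=\; \frac{1}{\ln 2}\sum_{k=0}^{\infty}\frac{s^{2k}}{2k+1},
\]
which is a power series in $s$ with strictly positive coefficients, hence strictly increasing in $s \in [0,1)$. Since $s = \sqrt{1-u^2}$ is strictly decreasing in $u \in (0,1)$, $\phi'(u)/u$ is strictly decreasing in $u$, as required.

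The only real obstacle is the bookkeeping in the chain-rule computation of $\phi''$; once the variable $s$ is introduced, both claimed inequalities collapse to the single observation that $\tfrac{1}{2s}\ln\tfrac{1+s}{1-s} = \sum_{k\ge 0} s^{2k}/(2k+1)$ has only strictly positive coefficients and leading term $1$.
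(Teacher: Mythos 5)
Your computation is correct and follows essentially the same route as the paper: the same substitution $s=\sqrt{1-u^2}$ (the paper calls it $v$), the same closed forms for $\phi'$ and $\phi''$ written via $\tfrac12\ln\tfrac{1+s}{1-s}$, and the same power series $\phi'(u)/u=\tfrac{1}{\ln 2}\sum_{k\ge0}\tfrac{s^{2k}}{2k+1}$. The lower bound $\phi''>0$ is handled identically. The one point to tighten is your argument for the upper bound: the paper expands both $\phi'(u)/u$ and $\phi''(u)$ as power series in $s$ and compares them term by term (using $\tfrac{1}{2n+3}<\tfrac{1}{2n+1}$), whereas you reformulate the inequality as $(\phi'(u)/u)'<0$ and then justify it by saying $\phi'(u)/u$ is strictly decreasing. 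That last step has a small strictness gap: a strictly decreasing function need not have strictly negative derivative at every point (e.g.\ $u\mapsto -u^3$ at $u=0$), so ``strictly decreasing'' only yields $\phi''(u)\le\phi'(u)/u$. The fix is immediate from the series you already wrote down: $\frac{d}{du}\sum_{k\ge0}\frac{s^{2k}}{2k+1}=\Bigl(\sum_{k\ge1}\frac{2k}{2k+1}s^{2k-1}\Bigr)\cdot\frac{ds}{du}$, where the sum is strictly positive for $s\in(0,1)$ and $ds/du=-u/s<0$, so the derivative is strictly negative pointwise. Alternatively, compare the two series for $\phi''(u)$ and $\phi'(u)/u$ coefficient by coefficient, as the paper does.
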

\begin{proof}
Let $v = \sqrt{1-u^2} \in (0,1)$ to 
simplify the calculations. 
Taking derivatives of $\phi$ we obtain:
\begin{align}
\quad  \frac{1}{u} \cdot \frac{d\phi}{du} 
&= \frac{1}{\ln 2} \cdot \frac{\alpha(v)}{v} \label{eq:phi-derivA} \\
\frac{d^2 \phi}{du^2} 
&= \frac{1}{\ln 2} \cdot \frac{\alpha(v)-v}{v^3}, \label{eq:phi-derivB} 
\end{align}
where $\alpha(v)$ above denotes the inverse hyperbolic tangent function, i.e., 
$\alpha: v \in (0,1) \mapsto \half\log \bigl(\tfrac{1+v}{1-v}\bigr)$.

The Taylor series of $\alpha(v)$ equals 
$\sum_{n \ge 1} \tfrac{v^{2n-1}}{2n-1}$ which converges absolutely 
for $v \in (0,1)$. Therefore:
\begin{align*}
\frac{\phi'(u)}{u} &= \frac{1}{\ln 2} \cdot \Bigl(1 + \sum_{n \ge 1} \frac{v^{2n}}{2n+1}\Bigr) \\
\phi''(u) &= \frac{1}{\ln 2} \cdot \Bigl(\frac{1}{3} + \sum_{n \ge 1} \frac{v^{2n}}{2n+3} \Bigr)
\end{align*}
Comparing the right hand side of both expressions term by term, 
the desired inequality follows for all $u \in (0,1)$.
\end{proof}

\begin{lemma}\label{lem:phi-con}
The function $\phi(u)$ is strictly convex whereas the function $\phi(\sqrt{w})$ is strictly concave over their domain $[0,1]$.
\end{lemma}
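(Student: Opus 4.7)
The plan is to read off both statements of \Cref{lem:phi-con} directly from \Cref{lem:phi-deriv} by differentiating twice, then upgrade strict convexity/concavity from the open interval to the closed one by appealing to continuity.

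For the first claim, strict convexity of $\phi$ on $(0,1)$ is immediate: \Cref{lem:phi-deriv} already asserts $\phi''(u) > 0$ for all $u \in (0,1)$. For the second claim, set $g(w) \coloneqq \phi(\sqrt{w})$ and apply the chain rule. Writing $u = \sqrt{w}$, I would compute
\[ g'(w) = \frac{\phi'(u)}{2u}, \qquad g''(w) = \frac{1}{4w}\Bigl(\phi''(u) - \frac{\phi'(u)}{u}\Bigr). \]
The bracket is strictly negative by the upper bound $\phi''(u) < \phi'(u)/u$ supplied by \Cref{lem:phi-deriv}, so $g''(w) < 0$ on $(0,1)$, giving strict concavity there.

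To finish, I would extend both strict inequalities to the closed interval $[0,1]$. Since $\phi$ is continuous on $[0,1]$ (it is a composition of elementary continuous functions, with the endpoint values $\phi(0)=0$ and $\phi(1)=1$), and since strict convexity on $(0,1)$ together with continuity implies strict convexity on $[0,1]$ (for any $x<y$ in $[0,1]$ and $\lambda\in(0,1)$, the point $\lambda x + (1-\lambda)y$ lies in $(0,1)$ and can be compared, using standard chord arguments, to an interior convex combination on which the strict inequality already holds), the conclusion extends. The same reasoning applies to $g(w) = \phi(\sqrt w)$.

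There is no real obstacle: \Cref{lem:phi-deriv} was formulated exactly so as to encode both signs. The only mild subtlety is the behavior at the endpoints, where the series expansions in the proof of \Cref{lem:phi-deriv} show that $\phi'(u)/u$ and $\phi''(u)$ blow up as $u\to 0$; this is harmless because strict convexity/concavity is a statement about chord positions, not about finiteness of second derivatives, and is transferred to the closed interval by the continuity argument above.
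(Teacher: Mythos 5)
Your proposal is correct and matches the paper's proof essentially step for step: strict convexity of $\phi$ from $\phi''>0$, strict concavity of $\phi(\sqrt{w})$ from the second-derivative computation $\tfrac{1}{4w}\bigl(\phi''(u)-\phi'(u)/u\bigr)<0$, and the continuity argument to pass to the closed interval. Your extra care about the endpoint behavior is a slight elaboration of what the paper leaves implicit, but the route is the same.
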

\begin{proof}
Since $\phi(u)$ is continuous over its domain $[0,1]$, and $\phi''(u) > 0$
for all $u \in (0,1)$ by~\cref{lem:phi-deriv}, thus
$\phi(u)$ is strictly convex.

Define $\psi(w) := \phi(\sqrt{w})$ and let $u=\sqrt{w}$. Now
$\psi''(w) = \tfrac{1}{4u^2} \cdot \bigl(\phi''(u) - \phi'(u)/u\bigr)
< 0$ by ~\cref{lem:phi-deriv}, for all $u \in (0,1)$. 
Since $\psi(w)$ is also continuous over $[0,1]$, it is strictly concave.
\end{proof}
As a consequence, we obtain the following inequalities.
\begin{lemma} \label{lem:phi-ineq}
For all $u \in [0,1]$:
\begin{enumerate}[label=\normalfont(\alph*)]
\item $\phi(u) \le u$ with equality only at $u \in \{0,1\}$;
\item $\phi(u) \ge u^2$ with equality only at $u \in \{0,1\}$; and
\item $\phi(u) \ge 1 + (u - 1)/\ln 2$ with equality only at $u=1$.
\end{enumerate}
\end{lemma}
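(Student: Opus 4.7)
The plan is to derive all three inequalities as direct consequences of the convexity/concavity facts in \cref{lem:phi-con}, together with the two boundary values $\phi(0)=\Ent(0)=0$ and $\phi(1)=\Ent(1/2)=1$.

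For part (a), I would invoke the strict convexity of $\phi$ on $[0,1]$. The chord joining $(0,\phi(0))=(0,0)$ and $(1,\phi(1))=(1,1)$ is exactly the line $y=u$, and strict convexity forces $\phi(u)<u$ for every interior point $u\in(0,1)$, with equality only at the endpoints $u\in\{0,1\}$.

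For part (b), I would switch to $\psi(w)=\phi(\sqrt{w})$, which by \cref{lem:phi-con} is strictly concave on $[0,1]$. Since $\psi(0)=\phi(0)=0$ and $\psi(1)=\phi(1)=1$, the chord is $y=w$, and strict concavity gives $\psi(w)>w$ on $(0,1)$. Substituting $w=u^2$ yields $\phi(u)>u^2$ for $u\in(0,1)$ with equality only at $u\in\{0,1\}$.

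For part (c), the right-hand side $1+(u-1)/\ln 2$ is the tangent line to $\phi$ at $u=1$, so the statement is just the fact that a strictly convex function lies strictly above its tangent everywhere except at the point of tangency. The only nontrivial step, and the main obstacle I anticipate, is evaluating $\phi'(1)$, since the formula \eqref{eq:phi-derivA} becomes $0/0$ as $u\to 1$ (equivalently $v\to 0$). Here I would use the Taylor series derived in the proof of \cref{lem:phi-deriv}: the expression $\phi'(u)/u=\frac{1}{\ln 2}(1+\sum_{n\ge 1}v^{2n}/(2n+1))$ extends continuously to $v=0$ with value $1/\ln 2$, giving $\phi'(1)=1/\ln 2$. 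Combined with $\phi(1)=1$, strict convexity then yields $\phi(u)\ge 1+(u-1)/\ln 2$ on $[0,1]$ with equality only at $u=1$.
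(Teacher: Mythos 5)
Your proposal is correct and follows essentially the same route as the paper: parts (a) and (b) are the chord inequalities for the strictly convex $\phi$ and the strictly concave $\phi(\sqrt{w})$ from \cref{lem:phi-con}, and part (c) is the tangent-line bound at $u=1$ after establishing $\phi'(1)=1/\ln 2$. The only cosmetic difference is that you read off $\phi'(1)$ from the Taylor series of $\alpha$, whereas the paper evaluates $\lim_{x\to 0}\alpha(x)/(x\ln 2)$ directly from \eqref{eq:phi-derivA}; both amount to the same computation.
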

\Cref{lem:phi-ineq}(a) can be restated as $\Ent(q) \le \Bh(q)$, 
as shown by Lin~\cite[Theorem 8]{lin1991divergence}.
\Cref{lem:phi-ineq}(b) can be restated as $\Ent(q) \ge \Bh(q)^2$,
as shown by Ar{\i}kan~\cite{arikan2010source}.
The lower bounds given in \Cref{lem:phi-ineq}(b) and \Cref{lem:phi-ineq}(c) are incomparable: when $u=0$,~\cref{lem:phi-ineq}(b) is tight but 
not~\cref{lem:phi-ineq}(c); when $u=1-\eps$ for some small $\eps>0$,
then $\phi(u) = 1 - \eps \log e + \Theta(\eps^2)$.
Up to the linear term this matches the bound given by~\cref{lem:phi-ineq}(c) but we get a worse 
bound with~\cref{lem:phi-ineq}(b).
\begin{proof}[Proof \normalfont(of~\cref{lem:phi-ineq})]
The proof uses the convexity statements in~\cref{lem:phi-con}.
The inequality in part~(a) follows by convexity:
$\phi(u) \le (1-u) \cdot \phi(0) + u \cdot \phi(1) = u$.
Note that $\phi(u) - u = 0$ for $u\in \{0,1\}$ and
by strict convexity of the function $\phi(u)-u$, this value 
is achieved only at the end points.

The inequality in part~(b) follows by concavity:
$\phi(\sqrt{w}) \ge (1-w) \cdot \phi(\sqrt{0}) + w \cdot \phi(\sqrt{1}) = w$; now set $w = u^2$.  
By strict concavity, the minimum of $\phi(\sqrt{w})-w$
is achieved only at the end points so equality holds 
only at $w = u \in \{0,1\}$.    

For part~(c), let $\ell(u)$ denote the right side of the inequality.
We show that $\ell(u)$ is the tangent line at $u=1$ which by convexity would establish the inequality.
By definition the tangent at $u=1$ equals $\phi(1) + (u-1)\phi'(1)$ so
we need to show that $\phi'(1) = \tfrac{1}{\ln 2}$.
By~\cref{eq:phi-derivA}, we have:
\begin{align*}
\phi'(1) 
&= \lim_{u \to 1} \frac{\phi'(u)}{u} 
= \lim_{x \to 0} \frac{\alpha(x)}{x \ln 2} \\
&=  \frac{1}{\ln 2} \cdot  \lim_{x \to 0} \alpha'(x) 
= \frac{1}{\ln 2} \cdot \lim_{x \to 0}\frac{1}{1-x^2}
= \frac{1}{\ln 2}
\end{align*}
Now $\phi(u) = \ell(u)$ at $u=1$ and by strict convexity
of $\phi(u)-\ell(u)$, 
its minimum is achieved only at this point.
\end{proof}

The above properties of $\phi$ have the following implications for
relating $I(W)$ to $Z(W)$.
Under the uniform distribution on the input $\{0,1\}$, let $Y$ denote
the output induced by the channel, i.e.,  
for each output letter $y \in \cY$, $p_Y(y) = \half(W(y|0)+W(y|1))$. 
Define the random variable:  
\[ U(y) \coloneqq \Bh(Q(y)),
\quad \text{where} \ 
Q(y) \coloneqq \frac{W(y|0)}{W(y|0)+W(y|1)} \]
The law of $Q$ is referred to as the Blackwell measure of $W$ in \cite{Raginsky16}. 
Related measures, giving alternative characterizations of a binary-input memoryless channel, have been used extensively in the context of information combining in \cite[Ch.~4]{mct},
and more specifically in polar coding in \cite[p.~30]{sasoglubook}.

Rewrite the channel parameters $I(W)$ and $Z(W)$ as expectations 
of appropriate functions of $U$:
\begin{equation}
\begin{split}
Z(W) &= \sum_y p_Y(y) \Bh(Q(y)) = \E \Bh(Q) = \E U  \\
1 - I(W) &= \sum_{y} p_Y(y) \Ent(Q(y)) = \E \Ent(Q) = \E \phi(U)
\end{split} \label{eq:chan-rewrite}
\end{equation}

\begin{theorem}\label{thm:channel-bounds}
$Z(W) \ge 1 - I(W) \ge \phi(Z(W))$
\end{theorem}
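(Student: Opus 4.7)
The plan is to read off the theorem directly from the representations established in~\eqref{eq:chan-rewrite} together with the pointwise and global properties of $\phi$ already proven. Once we have $Z(W) = \E U$ and $1 - I(W) = \E \phi(U)$, both inequalities become one-line consequences, so the work is almost entirely in recognizing which fact applies where.

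For the upper bound $Z(W) \ge 1 - I(W)$, I would invoke the pointwise inequality $\phi(u) \le u$ from \cref{lem:phi-ineq}(a). Since $U$ takes values in $[0,1]$, this applies pointwise, and taking expectations gives $\E \phi(U) \le \E U$, which is exactly $1 - I(W) \le Z(W)$.

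For the lower bound $1 - I(W) \ge \phi(Z(W))$, I would use Jensen's inequality: \cref{lem:phi-con} tells us $\phi$ is (strictly) convex on $[0,1]$, so $\E \phi(U) \ge \phi(\E U)$, which translates to $1 - I(W) \ge \phi(Z(W))$.

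There is no real obstacle here—both directions are immediate from \cref{eq:chan-rewrite} once we note that one inequality is a pointwise comparison of $\phi(u)$ with the linear function $u$ (so it survives taking an expectation), and the other is precisely the Jensen inequality afforded by the convexity of $\phi$ established in \cref{lem:phi-con}. If one wanted sharper statements, strict convexity and the equality conditions in \cref{lem:phi-ineq}(a) would characterize when each inequality is tight (namely, when $U$ is $\{0,1\}$-valued, i.e., when the channel is essentially a mixture of a perfect and a useless channel), but this is not required for the stated theorem.
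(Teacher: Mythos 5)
Your proof is correct and follows exactly the paper's argument: apply \cref{lem:phi-ineq}(a) pointwise and take expectations for the first inequality, then use Jensen's inequality with the convexity of $\phi$ from \cref{lem:phi-con} for the second, and substitute the identities in \cref{eq:chan-rewrite}. Nothing to add.
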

\begin{proof}
Applying~\cref{lem:phi-ineq}(a) and then using the fact that 
$\phi$ is convex (\cref{lem:phi-con}) yields: 
$\E U \ge \E \phi(U) \ge \phi( \E U)$.
Now substitute the identities in~\cref{eq:chan-rewrite}.
\end{proof}

By~\cref{lem:phi-ineq}, the first inequality is tight iff
$U \in \{0,1\}$ with probability 1. In other words, the inequality is tight iff the channel $W$ is such that
$W(y|0)W(y|1) = 0$ or $W(y|0) = W(y|1)$ for each output $y$. A channel with this property is called a binary erasure channel (BEC).
Indeed, this inequality was proved by Ar{\i}kan~\cite[Prop.~11]{arikan2009channel} by 
an indirect argument, using an extremal property of the BEC
in channel polarization.

The second inequality is tight iff $U$ is constant
with probability 1. Divide the outputs into two classes based
on the predicate $W(y|0) > W(y|1)$; this is operationally 
equivalent to a binary symmetric channel (BSC), 
i.e., a binary-input channel for which there exists 
a constant $0 \le \epsilon \le \half$ such that each $y$ satisfies
$\epsilon \cdot W(y|x) = (1-\epsilon) \cdot W(y|1-x)$ for some 
$x \in \{0,1\}$.

Now~\cref{lem:phi-ineq}(b) implies that $\phi(Z(W) \ge Z(W)^2$
so we obtain: $1 - I(W) \ge Z(W)^2$ (cf.~\cite{arikan2010source}). Equality
holds only when $Z(W) \in \{0,1\}$. Equivalently, the distributions
$W(\cdot|0)$ and $W(\cdot|1)$ are either identical or have disjoint
support. 
Next~\cref{lem:phi-ineq}(c) implies that 
$I(W) + Z(W) \cdot \log e \le \log e$. Equality holds only
when $Z(W) = 1$, i.e., the distributions
$W(\cdot|0)$ and $W(\cdot|1)$ are identical. 
To summarize:
\begin{corollary}\label{cor:channel-ineq}
For a binary input symmetric channel $W$:
\begin{enumerate}
\item $I(W) + Z(W) \ge 1$. Equality holds  only for the BEC.
\item $I(W) + \phi(Z(W)) \le 1$. Equality holds  only for the BSC.
\item $I(W) + Z(W)^2 \le 1$. Equality holds iff $Z(W) \in \{0,1\}$. 
\item $I(W) \cdot \ln 2 + Z(W) \le 1$. Equality holds iff $Z(W) = 1$.
\end{enumerate}
\end{corollary}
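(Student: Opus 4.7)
The plan is to derive all four parts by combining Theorem~\ref{thm:channel-bounds} with the appropriate items of Lemma~\ref{lem:phi-ineq}, and then to read off the equality conditions by tracing where the underlying bounds are tight.

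Parts (1) and (2) would follow immediately by rearranging the two inequalities in Theorem~\ref{thm:channel-bounds}. To identify the equality cases, I would go back through the proof of that theorem: equality in $\E U \ge \E \phi(U)$ combined with Lemma~\ref{lem:phi-ineq}(a) forces $U \in \{0,1\}$ almost surely, which, after unpacking $U = \Bh(Q)$, is equivalent to each output $y$ satisfying either $W(y|0)W(y|1) = 0$ or $W(y|0) = W(y|1)$, i.e., the BEC condition. Equality in the Jensen step $\E \phi(U) \ge \phi(\E U)$, by the strict convexity established in~\cref{lem:phi-con}, forces $U$ to be constant; translating back through the definition of $\Bh(Q)$ then gives precisely the BSC condition as described in the paragraph preceding the corollary.

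For parts (3) and (4), I would chain the second inequality of Theorem~\ref{thm:channel-bounds} with the lower bounds on $\phi$ provided by Lemma~\ref{lem:phi-ineq}(b) and (c), respectively. This yields $1 - I(W) \ge \phi(Z(W)) \ge Z(W)^2$ and $1 - I(W) \ge \phi(Z(W)) \ge 1 + (Z(W)-1)/\ln 2$, which rearrange to the claimed inequalities. For the equality condition in (3), Lemma~\ref{lem:phi-ineq}(b) forces $Z(W) \in \{0,1\}$; conversely one checks directly that the extremes $Z(W) = 0$ (noiseless) and $Z(W) = 1$ (pure noise) actually saturate $I(W) + Z(W)^2 = 1$. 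For (4), Lemma~\ref{lem:phi-ineq}(c) forces $Z(W) = 1$, and again the pure-noise channel makes the inequality tight.

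The main subtlety, more than an obstacle, is making the equality characterizations in~(1) and~(2) airtight: one must verify that the distributional conditions on $U$ (supported on $\{0,1\}$; constant) are genuinely equivalent to the channel-level conditions (BEC, BSC). These equivalences are routine unpackings of the definitions of $U$, $Q$, and $\Bh$, but they need to be spelled out rather than left implicit. Once this is handled, each of the four parts reduces to a one-line consequence of Theorem~\ref{thm:channel-bounds} and the corresponding part of Lemma~\ref{lem:phi-ineq}.
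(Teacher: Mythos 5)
Your proposal is correct and follows essentially the same route as the paper: the paper likewise obtains (1) and (2) directly from Theorem~\ref{thm:channel-bounds} with the equality cases traced back to $U\in\{0,1\}$ a.s.\ (BEC) and $U$ constant a.s.\ (BSC), and obtains (3) and (4) by chaining the second inequality with Lemma~\ref{lem:phi-ineq}(b) and (c). Your added care in verifying that $Z(W)\in\{0,1\}$ (resp.\ $Z(W)=1$) also forces the Jensen step to be tight is a sensible, if routine, completion of the argument.
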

Finally, we note that these inequalities can be restated in terms of 
distances between probability distributions, which was the original
motivation of Lin~\cite{lin1991divergence}.
Let $P$ and $Q$ be two distributions $P$ and $Q$ on $\cY$.
Identify $W(\cdot|0)$ with $P$ and $W(\cdot|1)$ with $Q$.
Then the Hellinger distance $\hl(P,Q)$ equals
$\sqrt{1-Z(W)}$ and the Jensen--Shannon divergence $\js(P,Q)$
equals $I(W)$. Thus~\cref{cor:channel-ineq} can be
restated as follows:
\begin{proposition}
For two distributions $P$ and $Q$:
\[ \hl^2(P,Q) \le \js(P,Q) \le \hl^2(P,Q) \cdot 
       \min\bigl(\log e, 2-\hl^2(P,Q)\bigr) \]
\end{proposition}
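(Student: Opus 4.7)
The plan is to read the proposition as a direct translation of~\cref{cor:channel-ineq} through the dictionary $Z(W) = 1 - \hl^2(P,Q)$ and $I(W) = \js(P,Q)$. Once those two identifications are in place, each of the three inequalities in the proposition falls out by simple algebra from parts~(1), (3), and~(4) of the corollary, with the minimum on the right coming from combining (3) and~(4).

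The first step is to verify the two identifications, since the excerpt only asserts them. For the Hellinger distance, using the convention $\hl^2(P,Q) = 1 - \sum_y \sqrt{P(y)Q(y)}$ and identifying $P = W(\cdot|0)$, $Q = W(\cdot|1)$ gives $\hl^2(P,Q) = 1 - Z(W)$ directly from the definition of $Z(W)$. For the Jensen--Shannon divergence, I would expand $\js(P,Q) = \tfrac{1}{2} D(P \| M) + \tfrac{1}{2} D(Q \| M)$ with $M = \tfrac{1}{2}(P+Q)$; the resulting double sum is exactly the defining sum for $I(W)$ term by term.

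The second step is the substitution. Part~(1) of~\cref{cor:channel-ineq}, $I(W) + Z(W) \ge 1$, becomes $\js(P,Q) \ge \hl^2(P,Q)$, which is the lower bound. Part~(3), $I(W) + Z(W)^2 \le 1$, rearranges to $\js(P,Q) \le 1 - (1 - \hl^2(P,Q))^2 = \hl^2(P,Q)\bigl(2 - \hl^2(P,Q)\bigr)$. Part~(4), $I(W)\ln 2 + Z(W) \le 1$, becomes $\js(P,Q) \le \hl^2(P,Q)/\ln 2 = \hl^2(P,Q)\log e$. Taking the minimum of the latter two yields the right-hand side of the proposition.

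There is essentially no obstacle, since the proposition is a relabeling of already-proved facts; the nontrivial content lives upstream in~\cref{cor:channel-ineq}. The only item that demands a moment of care is the base-of-logarithm bookkeeping, namely the identity $1/\ln 2 = \log e$, which is what recasts the constant in part~(4) as the cleaner factor $\log e$ in the statement.
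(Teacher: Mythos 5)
Your proposal is correct and matches the paper's approach exactly: the paper likewise treats the proposition as a restatement of \cref{cor:channel-ineq} under the identifications $\hl(P,Q)=\sqrt{1-Z(W)}$ and $\js(P,Q)=I(W)$, leaving the substitution into parts (1), (3), and (4) implicit. Your explicit verification of the two identifications and of the algebra $1-Z^2=\hl^2(2-\hl^2)$ only fills in details the paper omits.
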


\section*{Acknowledgment}
This work was jointy done at the Simons Institute for the
Theory of Computing at UC Berkeley. The authors would like to thank the institute for their invitation to participate in the Information Theory Program during Jan. 2015 --  June 2016.

\end{document}